\documentclass{IEEEtran}
\usepackage{graphicx} % Required for inserting images

\usepackage{amsmath}
\usepackage{amssymb, amsthm}
\usepackage{xcolor}
\usepackage{gensymb}

\usepackage{booktabs}   % \toprule, \midrule, \bottomrule 
\usepackage{multirow}

\newtheorem{theorem}{Theorem}
\newtheorem{example}{Example}

\title{Lattice Modulo Sampling
}

\author{Yhonatan Kvich, ~\IEEEmembership{Graduate Student Member,~IEEE,} Yonina C. Eldar,~\IEEEmembership{Fellow,~IEEE}\\Faculty of Math and Computer Science, Weizmann Institute of Science, Israel
\thanks{This research was supported by the Tom and Mary Beck Center for Renewable Energy as part of the Institute for Environmental Sustainability (IES) at the Weizmann Institute of Science, by the European Research Council (ERC) under the European Union’s Horizon 2020 research and innovation program (grant No. 101000967 and 101119062) and by the Israel Science Foundation (grant No. 536/22).}}

\begin{document}

\maketitle

\begin{abstract}
We propose a lattice-theoretic framework for modulo sampling of multidimensional bandlimited signals. Standard modulo analog-to-digital converters (ADCs) fold the signal component-wise into a square domain, reducing the recovery problem to independent one-dimensional problems.
We extend the recovery guarantees to any lattice, requiring the same sampling rate as in the standard component-wise modulo setting.
We also extend existing recovery algorithms to the general high-dimensional lattice setting.
Selecting a lattice with a smaller normalized second moment reduces the reconstruction mean squared error (MSE) through two complementary mechanisms: it lowers the folded signal power, which reduces the absolute noise energy at a fixed signal-to-noise ratio (SNR), and it reduces the 
quantization error when a matched lattice quantizer is applied.
Higher-dimensional lattices offer better second moment compared to the hypercube lattice, with gains that grow substantially with dimension.
Instantiating the framework in two dimensions with the hexagonal lattice reduces the MSE relative to the square at the same inradius by $\approx16.7\%$.
Furthermore, simulations on 8-dimensional signals using the $E_8$ lattice to achive $\approx 57\%$ in both additive and quantization noise.
A topological interpretation connects each folding geometry to a surface whose genus reflects the lattice complexity, and reveals a natural hardware implementation via comparator circuits.
\end{abstract}

\begin{IEEEkeywords}
Modulo sampling, dynamic range, quantization, high dimensional signals.
\end{IEEEkeywords}

\section{Introduction}

Analog-to-digital converters (ADCs) are essential components in 
digital signal processing systems, transforming analog signals into 
digital form. A key design challenge is accommodating high dynamic 
range (DR) input signals without clipping. Traditional approaches 
include oversampling \cite{marks1983restoring}, spectral gap 
exploitation \cite{abel1991restoring}, and automatic gain control 
\cite{perez2011automatic}, each carrying limitations in sampling 
rate, spectral knowledge, or nonlinear distortion.

An alternative strategy applies a modulo operation to the input 
signal before sampling, wrapping the amplitude within a fixed DR 
and avoiding saturation entirely \cite{park2007wide}. Bhandari 
et al.\ \cite{bhandari2020unlimited} showed that for bandlimited 
(BL) signals, sampling above the Nyquist rate enables recovery 
from modulo samples via higher-order differences (HOD), without 
any side information. This framework has since been extended to 
finite-rate-of-innovation (FRI) signals \cite{mulleti2024modulo}, 
shift-invariant (SI) spaces \cite{kvich2024Modulo, kvich2024Modulo2, 
kvich2025modulo}, sparse signals \cite{prasanna2020identifiability},
wavelet-based representations \cite{rudresh2018wavelet}, and 
direction-of-arrival estimation \cite{zhang2024line}, and validated 
on hardware prototypes 
\cite{bhandari2021unlimited, mulleti2023hardware, kvich2025practical}.
Several recovery algorithms have been developed to improve upon 
HOD, which is noise-sensitive and requires high oversampling factor ($\mathrm{OF}$). The 
$B^2R^2$ algorithm \cite{azar2022residual} recovers the signal 
using time- and frequency-domain separation, while 
LASSO-$B^2R^2$ \cite{shah2023lasso} exploits the sparsity of 
the residual differences to reduce computational cost and improve 
noise robustness. Deep unfolding approaches 
\cite{kvich2025msquid} further improve performance at low 
$\mathrm{OF}$ by learning recovery parameters from data.

The work of \cite{bouis2021multidimensional} considered multidimensional 
modulo sampling, where the signal maps from a higher-dimensional domain 
to the real line.
Existing works on 2D and complex-valued signals apply the modulo 
operator independently to the real and imaginary components, 
reducing the problem to two parallel 1D recoveries 
\cite{florescu2022surprising}. While straightforward, this 
component-wise approach does not exploit the joint geometry of the 
2D folding domain to improve robustness to noise and quantization.

In this paper, we propose a lattice-theoretic framework for modulo sampling of multidimensional BL signals, replacing the standard component-wise square folding with a general lattice modulo operator. We extend the recovery guarantees to any lattice, requiring the same OF as in the standard component-wise modulo setting, and show that existing recovery algorithms naturally extend to the general lattice setting. We analyze the reconstruction mean squared error (MSE) through the normalized second moment and show that selecting a lattice with a smaller second moment reduces the MSE through two complementary mechanisms: it lowers the folded signal power, reducing the absolute noise energy at a fixed signal-to-noise ratio (SNR), and when the quantizer is matched to the lattice by placing quantization points on a scaled copy of the lattice itself, it reduces the quantization error. Higher-dimensional lattices can offer strictly smaller second moments than the hypercube, with gains that grow with dimension. As a concrete two-dimensional instantiation, the hexagonal lattice reduces the MSE by $\approx 16.7\%$ relative to the square at the same inradius, and simulations on 8-dimensional signals using the $E_8$ lattice confirm a $\approx 57\%$ reduction under both additive and quantization noise. Finally, we provide a topological interpretation of the modulo operator: each lattice defines a flat manifold whose facet structure determines a theoretical hardware implementation in which each pair of opposite facets is monitored by two comparators that inject a correction along the corresponding vector upon crossing.

The remainder of this paper is organized as follows. Section~\ref{sec:mod_high_dim} introduces the general lattice modulo operator and establishes the unique identification guarantee. Section~\ref{sec:recovery} presents practical recovery algorithms, extending existing methods to the general lattice setting. Section~\ref{sec:noise_analysis} analyzes the robustness of the framework to additive noise and quantization, with the performance characterized by geometric properties of the lattice Voronoi cell. Section~\ref{sec:topology} provides a topological interpretation of the modulo operator and describes a theoretical hardware architecture based on comparator circuits. Section~\ref{sec:simulations} presents simulations under both additive noise and quantization, validating the theoretical predictions and Section~\ref{sec:conclusion} concludes 
the paper.

% In this paper, we propose a lattice-theoretic framework for modulo sampling that extends from one dimension to higher dimensions. We define a general lattice modulo operator and show that existing recovery algorithms naturally extend to any lattice. We analyze the quantization error through the normalized second moment and show that the mean squared error (MSE) is determined jointly by the lattice shape and cell volume, with reductions that grow with dimension through the use of known optimal lattices. As a concrete instantiation, we consider the two-dimensional case and show that the hexagonal lattice, which minimizes the normalized second moment among 2D lattice tilings, achieves a lower MSE than the square at the same threshold. We further demonstrate through simulations on 8-dimensional signals that the gains grow with dimension: the $E_8$ lattice reduces the MSE by $\approx57\%$ relative to the hypercube at the same threshold. We also provide a topological interpretation connecting each folding geometry to a surface, linking the lattice structure to a theoretical hardware implementation via comparator circuits.

\section{Modulo for Higher Dimensions}
\label{sec:mod_high_dim}

\subsection{Background}

The modulo operator with threshold $\lambda > 0$ is defined as
\begin{equation}
    \mathcal{M}_\lambda(x) = \left((x + \lambda) \bmod 2\lambda\right) - \lambda 
    \,:\, \mathbb{R} \to [-\lambda, \lambda]
\end{equation}
mapping any real-valued signal into the bounded interval $[-\lambda, \lambda]$ 
regardless of its DR. This operator serves as a general-purpose 
front-end for high DR acquisition.

For a BL signal $f(t) \in PW_\Omega$ (the Paley-Wiener space 
of functions with Fourier support in $[-\Omega, \Omega]$), the folded samples are 
$y_k = \mathcal{M}_\lambda f[kT]$. It has been shown that sampling above the 
Nyquist rate is sufficient to uniquely identify $f(t)$ from its modulo samples 
alone \cite{azar2025unlimited}. The core recovery challenge is identifying, for each sample, the 
integer $p_k \in \mathbb{Z}$ such that
\begin{equation}
    f[kT] = y_k + 2\lambda p_k .
\end{equation}

Several algorithms address this recovery problem, including HOD 
\cite{bhandari2020unlimited}, $B^2R^2$ \cite{azar2022residual}, and 
LASSO-$B^2R^2$ \cite{shah2023lasso}, all of which reduce to identifying 
the correct integer $p_k$ at each sample. This becomes increasingly 
challenging at smaller $\lambda$ or lower $\mathrm{OF}$, as the candidates are closer together, and noise further complicates the disambiguation between adjacent values.

%%%%%%%%%%%%%%%%%%%%%%%%%%%%

For 2D signals or complex-valued signals 
$f(t) = f_I(t) + j f_Q(t) \in \mathbb{C}$, the natural extension applies 
the modulo operator independently to each component:
\begin{equation}
    \mathcal{M}_\lambda^{\mathrm{2D}}(f) 
    = \mathcal{M}_\lambda(f_I) + j\,\mathcal{M}_\lambda(f_Q)
\end{equation}
confining the signal to the square $[-\lambda,\lambda]^2 \subset \mathbb{C}$. 
Recovery then reduces to two parallel 1D problems on the real and imaginary 
channels independently.

While conceptually straightforward, this component-wise approach overlooks the 
joint geometry of the 2D folding domain. A fundamental observation motivates our 
work: whereas in 1D the admissible interval $[-\lambda,\lambda]$ scales linearly 
with $\lambda$, in 2D the admissible region $[-\lambda,\lambda]^2$ has area 
$4\lambda^2$, scaling quadratically. The choice of folding geometry, that is, 
which shape tiles the complex plane, therefore directly governs the quantization 
error, the recovery noise tolerance, and the hardware complexity, in ways that 
have no 1D counterpart. 
This motivates the lattice-theoretic analysis of the following section.

\subsection{Lattice Modulo Folding}

The component-wise operator $\mathcal{M}_\lambda^{\mathrm{2D}}$ confines the 
signal to the square $[-\lambda,\lambda]^2$, which is the square lattice centered at the origin. This is a special case of a more general construction. A lattice $\Lambda \subset \mathbb{R}^n$ is the set of all integer 
linear combinations of $n$ linearly independent basis vectors 
$\{v_1, \ldots, v_n\} \subset \mathbb{R}^n$,
\begin{equation}
    \Lambda = \left\{ \sum_{i=1}^n k_i v_i \;:\; k_i \in \mathbb{Z} \right\} 
    = \mathbf{B}\mathbb{Z}^n
\end{equation}
where $\mathbf{B} = [v_1 \mid \cdots \mid v_n]$ is the generator matrix.
The Voronoi cell $\mathcal{V}$ of the origin is the set of points in $\mathbb{R}^n$ that are at least as close to the origin as to any other lattice point.
Denote the translates $\{\mathcal{V} + p\}_{p \in \Lambda}$ tile $\mathbb{R}^n$ without overlap, covering the entire space with shifted copies of $\mathcal{V}$.
The cell volume $V:=\mathrm{vol}(\mathcal{V}) = |\det \mathbf{B}|$. The group structure of 
$\Lambda$ guarantees that all Voronoi cells are congruent translates of 
$\mathcal{V}$, and that there exists a minimum distance 
$d_{\min} = \min_{p \in \Lambda \setminus \{0\}} \|p\|$ between any two 
distinct lattice points.

The modulo operator associated with $\Lambda$ maps any $\mathbf{z} \in \mathbb{R}^n$ 
to its residue within $\mathcal{V}$,
\begin{equation}
    \mathcal{M}_\Lambda(\mathbf{z}) = \mathbf{z} - \mathcal{Q}_{\Lambda}(\mathbf{z})
\end{equation}
where $\mathcal{Q}_{\Lambda}(\mathbf{z}) = \arg\min_{\mathbf{p} \in \Lambda} \|\mathbf{z} - \mathbf{p}\|_2$ is the 
nearest lattice point to $\mathbf{z}$. Therefore
$\mathcal{M}_\Lambda:\mathbb{R}^n\to \mathcal{V}$ and the samples of the folded signal 
$y[k] = \mathcal{M}_\Lambda(f[kT])$ lie in $\mathcal{V}$ for all $k$. The matrix $\mathbf{y}$ will denote the high-dimensional samples with time as the first coordinate.
We now state the identification guarantee for high-dimensional BL signals sampled under 
the lattice modulo operator. Denote $PW_\Omega^{\mathbb{R}^n}$ as a high-dimensional BL space with band $[-\Omega, \Omega]$ in each coordinate. The needed sampling rate will be strictly above Nyquist, meaning its the same rate for all lattices including the existing square lattice.

\begin{theorem}[Lattice Unlimited Sampling]
Let $f(t) \in PW_\Omega^{\mathbb{R}^n}$ and let 
$y[k] = \mathcal{M}_\Lambda f[kT]$ denote its lattice modulo 
samples at rate above Nyquist $T<\frac{\pi}{\Omega}$. Then $f(t)$ can be uniquely identified from the samples $y[k]$.
\end{theorem}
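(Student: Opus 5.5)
The plan is to reduce the lattice problem to the scalar one-dimensional result of \cite{azar2025unlimited}, which says that sampling above Nyquist uniquely identifies a BL function from its modulo samples. The reduction uses the generator matrix $\mathbf{B}$.

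\emph{Step 1: change coordinates.} Write $f(t) = (f_1(t),\ldots,f_n(t))$ with each $f_i \in PW_\Omega$. Since $\mathcal{M}_\Lambda(\mathbf{z}) = \mathbf{z} - \mathcal{Q}_\Lambda(\mathbf{z})$ and $\mathcal{Q}_\Lambda(\mathbf{z})\in\Lambda = \mathbf{B}\mathbb{Z}^n$, we have
\begin{equation*}
f[kT] = y[k] + \mathbf{B}\mathbf{p}_k, \qquad \mathbf{p}_k \in \mathbb{Z}^n .
\end{equation*}
Apply $\mathbf{B}^{-1}$ and set $g(t) = \mathbf{B}^{-1} f(t)$ and $\tilde y[k] = \mathbf{B}^{-1} y[k]$. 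Each coordinate $g_i$ is a finite linear combination of the $f_j$, so $g_i \in PW_\Omega$. We then obtain $g[kT] - \tilde y[k] \in \mathbb{Z}^n$.

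\emph{Step 2: reduce to the scalar folding convention.} The equation above says that $\tilde y_i[k]$ equals $g_i(kT)$ modulo $1$. In general it is not exactly $\mathcal{M}_{1/2}(g_i(kT))$, because the preimage $\mathbf{B}^{-1}\mathcal{V}$ of the Voronoi cell is a parallelepiped-like fundamental domain, not the cube $[-\tfrac12,\tfrac12)^n$. However, the known quantity $\tilde y_i[k]$ determines $\mathcal{M}_{1/2}(g_i(kT))$: apply the scalar modulo with $\lambda = 1/2$ to $\tilde y_i[k]$, which is legitimate since $\mathcal{M}_{1/2}(a+m) = \mathcal{M}_{1/2}(a)$ for $m\in\mathbb{Z}$. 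So from $y[k]$ we can compute the standard scalar modulo samples
\begin{equation*}
\mathcal{M}_{1/2}\bigl(g_i(kT)\bigr) = \mathcal{M}_{1/2}\bigl(\tilde y_i[k]\bigr), \qquad i=1,\ldots,n .
\end{equation*}

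\emph{Step 3: invoke the 1D theorem and map back.} For each $i$, $g_i\in PW_\Omega$ is sampled at $T<\pi/\Omega$ with threshold $\lambda = 1/2$. By \cite{azar2025unlimited}, $g_i$ is uniquely identified, possibly up to the usual constant ambiguity $2\lambda \mathbb{Z}$ if that result is stated in that form. Recovering $g$ gives $f = \mathbf{B} g$. Any constant integer offset in $g$ corresponds to a constant lattice vector offset in $f$, and it is handled exactly as in the scalar case. Uniqueness follows because the map $f\mapsto g$ is a bijection on $PW_\Omega^{\mathbb{R}^n}$ and the data map $y\mapsto(\mathcal{M}_{1/2}(\tilde y_i))_i$ is well defined.

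\emph{Main obstacle.} The delicate step is Step 2: the lattice residue does not coordinatewise coincide with the scalar modulo output in the $\mathbf{B}$-basis. The fix is to reapply the scalar modulo to $\tilde y$, exploiting that only the residue class modulo $\mathbb{Z}^n$ matters. I also need to confirm that the 1D result does not depend on the amplitude of $g_i$ relative to $\lambda$; it does not, since the unlimited sampling guarantee holds for arbitrary dynamic range. It is also worth noting explicitly that the rate condition involves only $\Omega$ and is unaffected by $\mathbf{B}$, which gives the lattice-independent rate claimed in the statement.
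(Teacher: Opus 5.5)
Your proof is correct, but it takes a different route from the paper's.

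\textbf{The paper's route.} The paper never passes through the scalar modulo theorem. It uses finite energy (Riemann--Lebesgue) to find $N$ with $\|f[kT]\|<d_{\min}/2$ for $|k|>N$. On that tail the folding is inactive, so $y[k]=f[kT]$. It then invokes the fact from \cite{kvich2025modulo} that a bandlimited signal sampled strictly above Nyquist is determined by its samples on $\{|k|>N\}$, applied coordinate by coordinate. For uniqueness one takes $N$ large enough for both candidate signals, so their difference vanishes on a tail and hence identically. The only lattice quantity that enters is $d_{\min}$, i.e.\ the fact that the Voronoi cell contains the open ball of radius $d_{\min}/2$.

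\textbf{Your route.} You use only the group structure of $\Lambda$. The data $y[k]$ determines the coset $f[kT]+\Lambda$, and $\mathbf{B}^{-1}$ maps it to $g[kT]+\mathbb{Z}^n$. Because $\mathcal{M}_{1/2}$ is $1$-periodic, you obtain genuine component-wise modulo samples of $g_i=(\mathbf{B}^{-1}f)_i\in PW_\Omega$, to which the scalar theorem of \cite{azar2025unlimited} applies.

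\textbf{What each buys.} Your reduction gives a structural statement: lattice folding loses no more information than component-wise folding of the skewed signal $\mathbf{B}^{-1}f$. Consequently, any scalar identifiability result or recovery guarantee (for instance HOD run on $g$) transfers verbatim, and the lattice-independence of the required rate is immediate. The reduction step itself uses no decay, so it extends to any signal class that is a linear space over channels and admits a scalar result. The price is that you rely on the 1D theorem as a black box. The paper's tail argument, by contrast, is self-contained given tail-uniqueness and specializes to the scalar case when $n=1$. Your reduction also discards the Voronoi geometry: $\mathbf{B}^{-1}$ can amplify perturbations, so this viewpoint would not explain the noise and quantization advantages the paper derives later.

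\textbf{A minor point.} Your hedge about a constant $2\lambda\mathbb{Z}$ ambiguity is unnecessary. Nonzero constants do not have finite energy and so are not in $PW_\Omega$, so the scalar result already gives exact uniqueness and no offset needs handling.
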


\begin{proof}
Since $f(t)$ have finite energy, we know from Riemann-Lebesgue Lemma that $f(t)\to 0$ as $|t|\to\infty$. Therefore there exist $N$, such that $|f[kT]|<\frac{d_{\min}}{2}$ for $|k|>N$. So in this case, $y[k] =\mathcal{M}_\Lambda  f [kT] = f [kT]$. Using \cite{kvich2025modulo} we know that a BL signal can be uniquely identifies from its tail sampled strictly above its Nyquist. Applying this to each coordinate of $f(t)$ concludes the proof.
\end{proof}

The theorem above guarantees the existence of a unique BL signal consistent with the modulo samples, but does not provide a practical recovery procedure. The proof exploits tail samples where the signal lies within $\mathcal{V}$ and therefore coincides with the folded signal, but such samples carry little energy and are difficult to identify reliably, particularly under noise or quantization. The following section presents practical recovery algorithms, extending existing methods to general higher-dimensional lattices.

\section{Lattice Modulo Recovery}
\label{sec:recovery}

Recovery of $f[kT_s]$ from $y[k]$ requires identifying the lattice point 
$p[k] \in \Lambda$ such that $f[kT_s] = y[k] + p[k]$. 
Existing recovery algorithms can be extend to the general 
lattice setting with the major change in the rounding step.
The HOD algorithm computes higher-order differences of the folded 
samples $y[k]$. At an oversampling factor of at least $2\pi e$ 
times the Nyquist rate \cite{bhandari2017unlimited}, it is guaranteed that for a 
sufficiently high difference order $N$, the $N$-th order differences 
of the true samples satisfy $|\Delta^N f[kT_s]| < d_{\min}/2$, where 
$d_{\min}$ is the minimum distance of $\Lambda$. Since the modulo 
operation does not affect values smaller than $d_{\min}/2$, the 
$N$-th order differences of $y[k]$ and $f[kT_s]$ coincide, and 
rounding $\Delta^N \mathbf{y}$ to the nearest point in $\Lambda$ yields 
the correct fold offset at each step. The original samples are then 
reconstructed by applying the corresponding $N$-th order 
summation to the recovered differences.

The $B^2R^2$ algorithm exploits two separation principles. In the 
frequency domain, since $f(t)$ is BL, the out-of-band discrete time Fourier transform (DTFT) 
components of the true samples vanish, so the out-of-band DTFT of $\mathbf{y}$ equals that of $-\mathbf{p}$.
Denoting by $\mathbf{F} \in \mathbb{C}^{N_{\mathrm{oob}} \times K}$ 
the matrix whose $(m,k)$-th entry is
\begin{equation}
    [\mathbf{F}]_{m,k} = e^{-j\Omega_m k}, 
    \quad \Omega_m \notin [-\Omega_{\max}, \Omega_{\max}]
\end{equation}
where $\Omega_m$ are the out-of-band normalized digital frequencies 
and $K$ is the number of samples. The out-of-band constraint reads 
$ \mathbf{F} \mathbf{y} = - \mathbf{F} \mathbf{p}$.
In the time domain, $p[k] \in \Lambda$ has restricted support, meaning $p[k] = 0$ for $|k|>K_{\max}$. This restricts the search space, denote as $\mathbf{S}\subset\mathbb{R}^{K\times n}$. 
$B^2R^2$ combines these two constraints to recover $\mathbf{p}$ from the samples $\mathbf{y}$.
The resulting least-squares problem
\begin{equation}
    \hat{\mathbf{p}} = \arg\min_{\mathbf{p} \in \mathbf{S}}
    \left\| \mathbf{F} \mathbf{p}  
    + \mathbf{F} \mathbf{y} \right\|_2^2 .
\end{equation}
This is solved via standard gradient descent. The final reconstruction is $\hat{f}[kT_s] = \mathbf{y} + \mathcal{Q}_\Lambda (\hat{\mathbf{p}})$, where $\mathcal{Q}_\Lambda$ is applied across the sampling dimension.
LASSO-$B^2R^2$ exploits the same out-of-band equality but recovers 
the full residual globally by noting that the first difference 
$\Delta \mathbf{p}$ is sparse, being nonzero only at fold events. Letting 
$\mathbf{v} = \Delta \mathbf{p}$ and $\mathbf{C}$ denote the cumulative 
summation matrix on the first dimension, so that $\mathbf{p} = \mathbf{C}\mathbf{v}$. The 
$\ell_1$ constraint on $\mathbf{v}$ is yields the convex problem
\begin{equation}
    \hat{\mathbf{v}} = \arg\min_{\mathbf{v} \in \mathbb{R}^{K\times n}}
    \left\|  \mathbf{F}\mathbf{C} \mathbf{v} 
    + \mathbf{F} \mathbf{y} \right\|_2^2 
    + \mu \|\mathbf{v}\|_1
\end{equation}
where $\mu$ is a regularization factor and the task can be solved via proximal gradient descent. The signal is 
then reconstructed as $\hat{f}[kT_s] = \mathbf{y} +\mathcal{Q}_\Lambda ( \mathbf{C}\hat{\mathbf{v}})$, 
which include a rounding step onto $\Lambda$.

Finding the nearest lattice point $\mathcal{Q}_\Lambda(\mathbf{x})$ is required at the final rounding step of each recovery algorithm, and also plays a central role in the quantization analysis presented in the next section. For the standard lattices used in this paper, efficient exact algorithms exist \cite{conway1982fast}. We summarize the relevant cases here.

For the integer lattice $\mathbb{Z}^n$, the nearest point is obtained by rounding each coordinate independently, $\mathcal{Q}_{\mathbb{Z}^n}(\mathbf{x}) = \mathrm{round}(\mathbf{x})$, where ties are broken by choosing the integer with smaller absolute value.
The lattice $D_n$ is defined as the set of integer vectors with even coordinate sum,
\begin{equation}
    D_n = \left\{ \mathbf{x} \in \mathbb{Z}^n : 
    \sum_{i=1}^n x_i \equiv 0 \pmod{2} \right\}.
\end{equation}
Following \cite{conway1982fast}, define $f(\mathbf{x}) = \mathrm{round}(\mathbf{x})$ as the component-wise nearest integer, and $g(\mathbf{x})$ as the same rounding except that the component furthest from an integer is rounded in the opposite direction (with ties broken by lowest index). Since $f(\mathbf{x})$ and $g(\mathbf{x})$ differ in exactly one coordinate by $\pm 1$, their coordinate sums have opposite parities: exactly one of the two lies in $D_n$. The nearest point in $D_n$ is therefore the one among $f(\mathbf{x})$ and $g(\mathbf{x})$ whose coordinate sum is even. If $\mathbf{x}$ is equidistant from two or more points of $D_n$, this procedure returns the nearest point with smallest norm. The algorithm is exact, and runs in $O(n)$.

The special case of $E_8$ lattice can be seen as the union
\begin{equation}
    E_8 = D_8 \cup \left(D_8 + 
    \tfrac{1}{2}\mathbf{1}\right)
\end{equation}
where $\mathbf{1} = (1,\ldots,1)^\top$ and $D_8 + \frac{1}{2}\mathbf{1}$ is the coset of $D_8$ obtained by a half-integer shift in all coordinates. The nearest point in $E_8$ is found by applying  $\mathcal{Q}_{D_8}$ twice: once to $\mathbf{x}$ directly, and once to the shifted input $\mathbf{x} - 
\frac{1}{2}\mathbf{1}$ (then shifting the result back), and returning the closer of the two candidates.
Meaning

\begin{equation}
\begin{aligned}
    \mathbf{c}_1 &= \mathcal{Q}_{D_8}(\mathbf{x}) \\
    \mathbf{c}_2 &= \mathcal{Q}_{D_8}\!\left(\mathbf{x} - 
    \tfrac{1}{2}\mathbf{1}\right) + \tfrac{1}{2}\mathbf{1} \\
    \mathcal{Q}_{E_8}(\mathbf{x}) &= 
    \begin{cases} 
        \mathbf{c}_1 & \text{if } \|\mathbf{x} - \mathbf{c}_1\|_2 
        \leq \|\mathbf{x} - \mathbf{c}_2\|_2 \\
        \mathbf{c}_2 & \text{otherwise.}
    \end{cases}
\end{aligned}
\end{equation}

\begin{example}
Consider a signal sample $\mathbf{x} = (2.3, -3.1, 5.6, 1.2, -4.4, 3.1, 6.7, -2.2)$ and we will compute  $\mathcal{Q}_{E_8}(\mathbf{x})$ via the two-coset algorithm.

\textit{Coset 1.} Apply $\mathcal{Q}_{D_8}$ to $\mathbf{x}$.
Rounding component-wise gives $f(\mathbf{x}) = (2, -3, 6, 1, -4, 3, 7, -2)$ with coordinate sum $10$, which is even, so $\mathcal{Q}_{D_8}(\mathbf{x}) = f(\mathbf{x})$ with distance $d_1 \approx 0.775$.

\textit{Coset 2.} Shift by $-\frac{1}{2}\mathbf{1}$ to get $\mathbf{x} - \frac{1}{2}\mathbf{1} = (1.8, -3.6, 5.1, 0.7, -4.9, 2.6, 6.2, -2.7)$. Rounding gives $f(\mathbf{x}) = (2, -4, 5, 1, -5, 3, 6, -3)$ with coordinate sum $5$, which is odd. The component 
furthest from an integer is the second entry (tied with the sixth), so $g(\cdot)$ rounds it the other way: $g = (2, -3, 5, 1, -5, 3, 6, -3)$ with sum $6$, which is even. Shifting back gives the 
second candidate $(2.5, -2.5, 5.5, 1.5, -4.5, 3.5, 6.5, -2.5)$ 
with distance $d_2 \approx 0.894$.

\textit{Selection.} Since $d_1 < d_2$, the nearest point is 
$\mathcal{Q}_{E_8}(\mathbf{x}) = (2, -3, 6, 1, -4, 3, 7, -2)$ 
and the folded sample is
\begin{equation}
\begin{aligned}
    \mathcal{M}_{E_8}(\mathbf{x}) = \mathbf{x} - \mathcal{Q}_{E_8}(\mathbf{x}) 
    = \\(0.3, -0.1, -0.4, 0.2, -0.4, 0.1, -0.3, -0.2)
\end{aligned}
\end{equation}
which lies within the $E_8$ Voronoi cell.
\end{example}

\section{Noise and Quantization}
\label{sec:noise_analysis}

The recovery algorithms presented in Section~\ref{sec:recovery} operate on the folded samples $y[k] = \mathcal{M}_\Lambda f[kT]$ as received by the ADC. The goal is to find the modulo effect $p[k]=f[kT]-y[k]\in\Lambda$. In practice, the ADC can introduce distortion of one of two kinds. In the additive noise setting, the received sample is $\tilde{\mathbf{y}} = \mathbf{y} + \mathbf{n}$ where $\mathbf{n}$ is a noise vector. In the quantization setting, the ADC maps each folded sample to the nearest point on a quantization grid. We consider two architectures: a standard component-wise scalar quantizer, independent of the lattice used; and a matched lattice quantizer that projects onto the scaled lattice $2^{-B}\Lambda$, giving $\tilde{\mathbf{y}} = \mathcal{Q}_{2^{-B}\Lambda}(\mathbf{y})$ with $B$ denoting the number of bits.

In both settings, the distortion $\mathbf{e} = \tilde{\mathbf{y}} - \mathbf{y}$ determines the reconstruction quality through two mechanisms: a smaller $\|\mathbf{e}\|$ reduces the probability that the unfolding step fails. In addition, when unfolding succeeds (meaning $\hat{\mathbf{p}}=\mathbf{p}$) the residual MSE equals

\begin{equation}
\left\|\hat{f} - f\right\| = \left\| \tilde{\mathbf{y}} - \mathbf{y}\right\|=\|\mathbf{e}\|.
\end{equation}
As we show in the remainder of this section, both mechanisms are governed by the geometry of the lattice Voronoi cell, and selecting a better lattice reduces the distortion — and therefore the reconstruction error — simultaneously in both regimes.

Given an $n$-dimensional lattice $\Lambda$ and its Voronoi cell $\mathcal{V}$ with volume $V$. The normalized second moment, also 
called the \emph{dimensionless second moment} \cite{conway1984voronoi, conway2003voronoi, conway2013sphere}, is
\begin{equation}
    G_\Lambda = \frac{1}{n \cdot V^{1+2/n}} \int_{\mathcal{V}} \|r\|^2 \, \mathrm{d}r
\end{equation}
and the mean squared quantization error per dimension for a uniform input is 
$\mathrm{MSE} = n \cdot G \cdot V^{2/n}$.

The natural comparison between two lattices is at equal cell area, scaled by $\lambda^n$. At fixed area, the MSE ratio between two lattices is simply the ratio between the second moments at equal inradius $\lambda$.
When the cell volumes differ, giving
\begin{equation}
    \frac{\mathrm{MSE}_{\Lambda_1}}{\mathrm{MSE}_{\Lambda_2}} 
    = \frac{G_{\Lambda_1} \cdot V_{\Lambda_1}^{2/n}}
           {G_{\Lambda_2} \cdot V_{\Lambda_2}^{2/n}}
\end{equation}
where both volumes scale as $\lambda^n$.
The choice of lattice affects the recovered signal quality through two distinct mechanisms.

\textbf{Additive noise.} Assuming the folded signal $ \mathcal{M}_\Lambda(f)$ is uniformly distributed over $\mathcal{V}$, its average power 
equals $\mathrm{MSE}_\Lambda$. Since the SNR is defined as the ratio of signal power to noise power, a smaller $\mathrm{MSE}_\Lambda$ at the same inradius $\lambda$ directly implies lower absolute noise power. The benefit of one lattice over another is captured by the ratio $\mathrm{MSE}_{\Lambda_1}/\mathrm{MSE}_{\Lambda_2}$, which reduces the probability of a fold identification error and the residual error in cases of perfect recovery.

\textbf{Quantization.} When the folded samples are passed through a $B$-bit ADC, two quantization architectures are possible. The first applies a standard component-wise scalar quantizer to the folded signal regardless of the lattice geometry, which is attractive since it requires no modification to conventional ADC hardware. The second exploits the lattice structure by quantizing onto the scaled lattice $2^{-B}\Lambda$, which is a finer version of $\Lambda$ with $2^{nB}$ points per unit cell. Since all Voronoi cells of $2^{-B}\Lambda$ are congruent scaled copies of $\mathcal{V}$, the mean squared 
quantization error is
\begin{equation}
    \mathrm{MSE}_{\Lambda}^{\mathrm{quant}} = n \cdot G \cdot 
    \left(\frac{V}{2^{nB}}\right)^{2/n}
    = \mathrm{MSE}_\Lambda \cdot 4^{-B}.
\end{equation}
The ratio of quantization errors between two lattices with $B$ bits is therefore $\mathrm{MSE}_{\Lambda_1}^{\mathrm{quant}} / \mathrm{MSE}_{\Lambda_2}^{\mathrm{quant}} = \mathrm{MSE}_{\Lambda_1}/\mathrm{MSE}_{\Lambda_2}$. Note that this ratios is independent of the number of bits $B$. The gain from choosing an optimal lattice is the same at every bit depth, as confirmed in 
Table~\ref{tab:lattice_G}.

For the hypercube lattice, straightforward integration gives $G_\square = 1/12$ 
for all $n$. In two dimensions, the hexagonal lattice $A_2$, generated by the 
basis matrix
\(
    \mathbf{B}_{A_2} = 2\lambda 
    \begin{pmatrix} 1 & 1/2 \\ 0 & \sqrt{3}/2 \end{pmatrix}
\)
is known to minimize $G$ among all 2D lattice tilings, with 
$G_{A_2} = 5/(36\sqrt{3}) \approx 0.0802$. At equal inradius, 
$V_{A_2} = 2\sqrt{3}\lambda^2$ and $V_\square = 4\lambda^2$, giving
\begin{equation}
    \frac{\mathrm{MSE}_{A_2}}{\mathrm{MSE}_{\square}} 
    = \frac{G_{A_2} \cdot V_{A_2}}{G_{\square} \cdot V_{\square}}
    = \frac{5/(36\sqrt{3}) \cdot 2\sqrt{3}\lambda^2}{(1/12) \cdot 4\lambda^2}
    = \frac{5}{6} \approx 0.833
\end{equation}
a reduction of $16.7\%$ in mean square quantization error compare to regular sampling.

The gain increases with dimension. One approach is to couple signal dimensions 
in pairs and apply the $A_2$ lattice to each pair independently, yielding 
$16.7\%$ MSE reduction per pair with compared with regular sampling. More generally, higher-dimensional lattices 
can be constructed by combining lower-dimensional ones, or optimized directly 
for the specific dimension \cite{agrell2002optimization}.
Table~\ref{tab:lattice_G} reports known 
lattice quantizers for selected dimensions and their MSE relative to the hypercube at equal inradius.

The MSE reduction offered by a lattice can be interpreted as an equivalent improvement in sampling rate, SNR, or bit depth, providing intuition for its practical value. Since the reconstruction MSE under additive noise scales as $\mathrm{OF}^{-1}$ and as $4^{-B}$ under quantization, a reduction in MSE by a factor $r$ is equivalent to multiplying the oversampling factor by $1/r$, increasing the SNR by $10\log_{10}(1/r)$ dB, or saving $\log_4(r)$ bits of ADC resolution. 
For the $E_8$ lattice in eight dimensions, the $57\%$ MSE reduction ($-3.67$ dB) is equivalent to a $2.3\times$ increase in oversampling factor, a 
$3.67$ dB SNR improvement, or approximately $0.6$ saved bits. For the Leech lattice $\Lambda_{24}$ in 24 dimensions, the $80.3\%$ reduction ($-7.06$ dB) corresponds to a $5\times$ increase in oversampling factor, a $7.06$ dB SNR improvement, or approximately $1.2$ saved bits.

An important consequence concerns the two-lattice architecture in the quantization setting, where an optimal lattice modulo operator is paired with a standard component-wise scalar quantizer. Note that in the additive noise setting this architecture still benefits from the reduced folded signal power; the following argument applies specifically to 
quantization. The scalar quantizer partitions $\mathbb{R}^n$ into hypercubes of side $\delta = 2\lambda/2^B$ independently of the folding lattice, so the quantization error $\mathbf{e}$ is uniformly distributed over $[-\delta/2, \delta/2]^n$ regardless of which Voronoi cell $\mathbf{y}$ came from. The quantization MSE is therefore identical for all folding lattices. Near the cell boundary, a quantization bin may straddle $\partial\mathcal{V}$ and map $\mathbf{y}$ outside $\mathcal{V}$, but this probability scales as $O(\delta/V^{1/n})$ and vanishes as $B$ increases — the same regime required for successful modulo recovery. We conclude that realizing the full quantization benefit of an optimal lattice requires a matched quantizer; a mismatched scalar quantizer yields the same MSE as the standard square modulo regardless of the folding lattice.

\begin{table}[t]
\caption{MSE relative to the hypercube at equal inradius $\lambda$ for 
known lattice quantizers.}
\label{tab:lattice_G}
\centering
\begin{tabular}{clcccc}
\toprule
$n$ & Lattice & $G_\Lambda$ & 
$V_\Lambda / (2\lambda)^n$ & 
$\mathrm{MSE}_\Lambda / \mathrm{MSE}_\square$ & Reduction \\
\midrule
$1$  & $\mathbb{Z}$   & $0.0833$ & $1.000$ & $1.000$ & $0\%$    \\
$2$  & $A_2$          & $0.0802$ & $0.866$ & $0.833$ & $16.7\%$ \\
$3$  & $A_3^*$       & $0.0785$ & $0.707$ & $0.748$ & $25.2\%$ \\
$4$  & $D_4$          & $0.0766$ & $0.500$ & $0.650$ & $35.0\%$ \\
$8$  & $E_8$          & $0.0717$ & $0.063$ & $0.430$ & $57.0\%$ \\
$24$ & $\Lambda_{24}$ & $0.0658$ & $5.96\times 10^{-8}$ & $0.197$ & $80.3\%$ \\
\bottomrule
\end{tabular}
\end{table}

The problem of minimizing the normalized second moment is related to, but distinct from, the sphere packing problem; both favor dense, ``round" Voronoi cells but the two optima do not coincide in general. For sphere packing, optimality is proven for $n \leq 8$ by classical results \cite{conway2013sphere}. For lattice quantization, optimality is proven only for $n \leq 3$, with the provably optimal lattices used in Table~\ref{tab:lattice_G} for these dimensions \cite{conway2013sphere}. For $n \geq 4$, the optimal lattice quantizers are not known; the lattices in Table~\ref{tab:lattice_G} are the best known from \cite{conway2013sphere} and are conjectured to be optimal \cite{conway1985lower}. The same work establishes a lower bound on the achievable normalized second moment, and the best known lattices are close to this bound, so any improvement from an as-yet-undiscovered optimal lattice would be small and of limited practical significance.

As the dimension grows, the dominant contribution to the MSE reduction comes from the volume ratio $V_\Lambda / V_\square$: optimal lattices pack space so efficiently that their Voronoi cells become exponentially smaller than the hypercube at the same inradius. Zador's asymptotic bound \cite{conway2013sphere} gives a limiting normalized second moment of $G \to 1/(2\pi e) \approx 0.0585$. Combined with the vanishing volume ratio, the MSE relative to the hypercube approaches zero as $n \to \infty$, meaning that in sufficiently high dimensions an optimal lattice quantizer can reduce the quantization error to an arbitrarily small fraction of that of the hypercube at the same inradius.

\section{Topology and Hardware Implementation}
\label{sec:topology}

The modulo operator admits a natural topological interpretation: identifying 
opposite faces of the fundamental domain corresponds to gluing those faces 
together, and each pair of glued faces requires one folding loop implemented 
by two comparators in hardware. In 1D, identifying the endpoints $\pm\lambda$ 
of $[-\lambda,\lambda]$ yields a circle $S^1$, and the hardware implementation 
injects $\mp 2\lambda$ whenever the signal crosses $\pm\lambda$ 
\cite{bhandari2021unlimited, mulleti2023hardware, kvich2025practical}, 
requiring two comparators. In 2D, the square domain $[-\lambda,\lambda]^2$ 
has two pairs of opposite edges; gluing left to right and top to bottom 
yields a torus $T^2 = S^1 \times S^1$, implemented with two independent 
folding loops for a total of four comparators.

The hexagonal Voronoi cell has three pairs of opposite edges, corresponding 
to the three lattice directions defined by $\mathbf{B}_{A_2}$. Gluing all 
three pairs of opposite edges yields a genus-2 surface, a closed surface with 
two ``holes", reflecting the additional topological complexity of the hexagonal 
tiling relative to the square. Each pair requires two comparators monitoring 
the thresholds defined by that edge pair and injecting a correction along the 
corresponding lattice direction upon crossing, giving three independent folding 
loops and six comparators in total.
The three edge pairs correspond to the following boundary conditions on the real signal $(x, y) \in \mathbb{R}^2$. The first pair monitors $x = \pm\lambda$ and injects $\mp(2\lambda, 0)$ along the horizontal axis. The second pair monitors 
$x + \sqrt{3}\,y = \pm 2\lambda$ and injects $\mp(\lambda, \lambda\sqrt{3})$ along the direction of $v_2$, at $60\degree$ to the horizontal. The 
third pair monitors $x - \sqrt{3}\,y = \pm 2\lambda$ and injects $\mp(\lambda, -\lambda\sqrt{3})$ along the direction of $v_1 - v_2$, at $-60\degree$. Each comparator evaluates a linear combination of the two signal channels and triggers a reset of $\mp 2\lambda$ in the corresponding lattice direction upon crossing.

In higher dimensions, the lattice Voronoi cell has an increasing number of facet pairs. For the $n$-dimensional hypercube there are $n$ pairs, requiring $2n$ comparators.
In general, each facet of the Voronoi cell is associated with a relevant vector $\mathbf{p}_i \in \Lambda \setminus\{\mathbf{0}\}$ — a lattice point whose perpendicular bisector with the origin defines that facet. The number of such vectors of a lattice is called the kissing number. The corresponding threshold condition is $\langle \mathbf{p}_i, \mathbf{x} \rangle = \pm\|\mathbf{p}_i\|^2_2/2$, a fixed linear condition on the signal coordinates determined entirely by $\mathbf{p}_i$. When the signal crosses this threshold, the comparator triggers a correction vector $\mp \mathbf{p}_i$. The threshold coefficients and the injection vectors are thus fully determined by the lattice $\Lambda$, requiring no additional design choices beyond the selection of $\Lambda$. The resulting hardware is a fixed analog circuit that operates without runtime reconfiguration.

For denser lattices such as $D_4$, $E_8$, or $\Lambda_{24}$, 
the Voronoi cell has more facets, and gluing the corresponding opposite faces 
produces surfaces of increasingly high ''holes" of higher dimensions, reflecting the richer connectivity 
of the lattice. The number of comparators is the kissing number of the lattice, which grows with dimension and lattice complexity.
For example, the $E_8$ lattice has 240 relevant vectors, all of norm $\sqrt{2}$ in the unit lattice, forming 120 pairs $\pm\mathbf{p}_i$. The relevant vectors comprise two sets: vectors with exactly two nonzero entries of $\pm 1$ in all sign and coordinate combinations, and 
vectors with all eight entries equal to $\pm\frac{1}{2}$ with an even number of minus signs \cite{conway2013sphere}. Each relevant vector defines a facet via $\langle \mathbf{p}_i, \mathbf{x} \rangle = \pm 1$, requiring 240 comparators in the theoretical hardware implementation.
The comparator count for the geometries considered in this paper is summarized 
in Table~\ref{tab:comparators}.

\begin{table}[t]
\caption{Topology and theoretical hardware cost for each folding geometry.}
\label{tab:comparators}
\centering
\begin{tabular}{lccc}
\toprule
Geometry & Topology & Folding pairs & Comparators \\
\midrule
1D interval               & $S^1$   & $1$ & $2$ \\
2D Square $[-\lambda,\lambda]^2$ & $T^2$   & $2$ & $4$ \\
Hexagon $A_2$             & genus-2 & $3$ & $6$ \\
$D_4$             &  & $12$ & $24$ \\
$E_8$             &  & $120$ & $240$ \\
$\Lambda_{24}$             &  & $98280$ & $196560$ \\
\bottomrule
\end{tabular}
\end{table}

\section{Simulations}
\label{sec:simulations}

We demonstrate hexagonal modulo sampling and recovery on a complex 
BL signal $z(t) \in PW_\Omega^{\mathbb{C}}$, constructed as 
a sum of $N_c = 14$ complex sinusoids with frequencies drawn uniformly 
from $[-\Omega_{\max}, \Omega_{\max}]$ with $\Omega_{\max} = 10$ Hz. The signal is 
scaled so that $\max(|z_I|, |z_Q|) = 3\lambda$, ensuring the amplitude 
exceeds the ADC threshold by a factor of three. We sample at $6\times$ 
the Nyquist rate ($f_s = 120$ Hz), apply the square and hexagonal modulo 
operators separately, and recover using a \(B^2R^2\)-based algorithm, where 
the final rounding step projects onto the nearest point in 
$2\lambda\mathbb{Z}^2$ or $A_2$, respectively.

Fig.~\ref{fig:sim} shows the trajectories (top row) and 
the two components (bottom two rows). The 
original signal (dark, dashed) traverses multiple cells, while the 
folded signal (solid grey) is confined to the fundamental domain. In 
both the square and hexagonal cases, the recovered signal (dashed, 
colored) overlays the original to within machine precision.

\begin{figure}[t]
    \centering
    \includegraphics[width=\columnwidth]{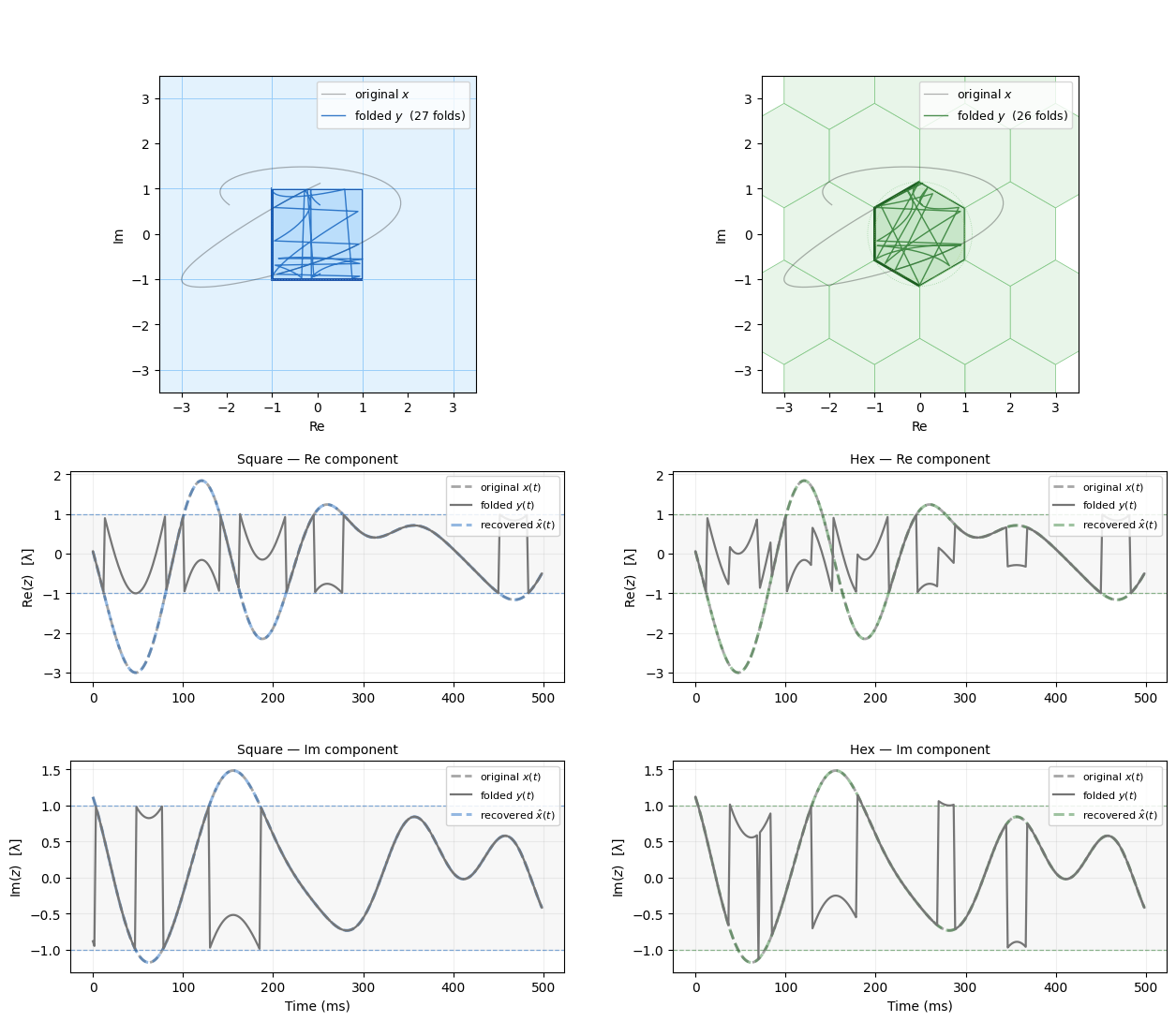}
    \caption{Square vs.\ hexagonal modulo sampling of a 2D BL 
    signal. Original 
    (dark dashed), folded (solid grey), recovered (colored dashed).}
    \label{fig:sim}
\end{figure}

To demonstrate that the MSE reduction grows with dimension, we 
extend the evaluation to 8-dimensional signals using the $E_8$ 
lattice.
We generate an 8-dimensional real-valued BL signal 
$f(t) \in \mathbb{R}^8$, where each channel is an 
independent sum of $N_c = 14$ sinusoids with frequencies drawn 
uniformly from $[-\Omega_{\max}, \Omega_{\max}]$, $\Omega_{\max} = 10$ Hz, 
normalised so that $\max_{t,d} |f_d(t)| = 1$, with modulo 
threshold $\lambda = 0.1$, corresponding to a dynamic range 
reduction factor of $1/\lambda = 10$.

Two noise models are evaluated, both at the same inradius $\lambda$. In the additive noise setting, zero-mean Gaussian noise is added to the folded samples at a prescribed SNR. In the quantization setting, three architectures are considered, differing in both the folding lattice and the quantizer applied after folding. The first applies component-wise square folding followed by a uniform mid-tread scalar quantizer with step size $\delta = 2\lambda/2^B$ per dimension (Sq+SqQ), serving as the baseline. The second applies $E_8$ folding followed by the same component-wise scalar quantizer ($E_8$+SqQ), isolating the contribution of the folding geometry while keeping the downstream ADC conventional. The third applies $E_8$ folding followed by quantization onto the scaled lattice $2^{-B}E_8$ ($E_8$+$E_8$Q), placing the quantization points on a finer copy of the $E_8$ lattice itself. Each quantization cell is a scaled copy of the $E_8$ Voronoi cell, achieving the full $G_{E_8}$ advantage over the hypercubic grid. The nearest-point projection onto $2^{-B}E_8$ is implemented via the exact algorithm of \cite{conway1982fast}.

Two noise models are evaluated. In the quantization setting, 
the folded samples are passed through a uniform mid-tread ADC 
with $B \in \{2, 4, 6, 8, 10, \infty\}$ bits, where $B = \infty$ 
denotes an ideal ADC. In the additive noise setting, uniform noise is added to the folded samples with $\mathrm{SNR} \in \{10, 15, 20, 25, 30, 35, \infty\}$ dB. 
Recovery uses HOD and $B^2R^2$, extended to 8 dimensions as described 
above, sweeping $\mathrm{OF} \in \{2, 4, 6, 8\}$ over 
$N = 500$ iid signals per case.

For the additive noise, failed in all tested 
configurations under noise and its results are omitted.
Table~\ref{tab:e8_recovery} shows the full fold recovery rate for 
$B^2R^2$. At low $\mathrm{OF} = 2$ all approaches fail 
for all finite SNR values. As OF increases, a clear recovery 
threshold emerges above which all approaches achieve perfect 
recovery. The $E_8$ approaches consistently reach this threshold 
at a lower SNR than the square, and wherever recovery succeeds, 
the $E_8$ lattice reduces the MSE by approximately $3.66$ dB, 
consistent with the theoretical prediction from 
Table~\ref{tab:lattice_G}.

\begin{table}[t]
\caption{Full fold recovery rate for $B^2R^2$ under additive noise,
$E_8$ lattice.}
\label{tab:e8_recovery}
\centering
\begin{tabular}{clccccccc}
\toprule
OF & Approach & 10 & 15 & 20 & 25 & 30 & 35 & $\infty$ \\
 & & dB & dB & dB & dB & dB & dB & \\
\midrule
\multirow{2}{*}{2}
 & Square & 0 & 0 & 0 & 0 & 0 & 0 & 1 \\
 & $E_8$ & 0 & 0 & 0 & 0 & 0 & 0 & 1 \\
\midrule
\multirow{2}{*}{4}
 & Square & 0 & 0 & 0 & 0 & 0.006 & 1 & 1 \\
 & $E_8$ & 0 & 0 & 0 & 0 & 0.472 & 1 & 1 \\
\midrule
\multirow{2}{*}{6}
 & Square & 0 & 0 & 0 & 0.996 & 1 & 1 & 1 \\
 & $E_8$ & 0 & 0 & 0 & 1 & 1 & 1 & 1 \\
\midrule
\multirow{2}{*}{8}
 & Square & 0 & 0 & 0.962 & 1 & 1 & 1 & 1 \\
 & $E_8$ & 0 & 0 & 1 & 1 & 1 & 1 & 1 \\
\bottomrule
\end{tabular}
\end{table}

Table~\ref{tab:e8_recovery_quant} shows the full fold recovery rate 
for $B^2R^2$ under quantization noise. The $E_8$ approaches reach the 
recovery threshold at a lower bit depth than the square: at 
$\mathrm{OF} = 6$, $E_8$+$E_8$Q achieves full recovery at $4$ bits while 
both Sq+SqQ and $E_8$+SqQ require $6$ bits.

A key distinction from the additive noise setting is that the 
folding geometry alone does not reduce the quantization MSE. 
Wherever recovery succeeds, $E_8$+SqQ achieves the same MSE as 
Sq+SqQ, since the square quantizer does not exploit the geometry 
of the $E_8$ Voronoi cell. By contrast, $E_8$+$E_8$Q achieves a 
consistent $3.66$ dB like in the additive case and as pridicted by the theory.
This confirms that realizing the full quantization advantage of 
$E_8$ requires both the matched folding operator and the matched 
quantizer.

\begin{table}[t]
\caption{Full fold recovery rate for $B^2R^2$ under quantization noise,
$E_8$ lattice.}
\label{tab:e8_recovery_quant}
\centering
\begin{tabular}{clccccccc}
\toprule
OF & Approach & 2b & 4b & 6b & 8b & 10b & $\infty$ \\
\midrule
\multirow{3}{*}{2}
 & Sq+SqQ & 0 & 0 & 0 & 0 & 1 & 1 \\
 & $E_8$+SqQ & 0 & 0 & 0 & 0 & 1 & 1 \\
 & $E_8$+$E_8$Q & 0 & 0 & 0 & 0.002 & 1 & 1 \\
\midrule
\multirow{3}{*}{4}
 & Sq+SqQ & 0 & 0 & 1 & 1 & 1 & 1 \\
 & $E_8$+SqQ & 0 & 0 & 1 & 1 & 1 & 1 \\
 & $E_8$+$E_8$Q & 0 & 0 & 1 & 1 & 1 & 1 \\
\midrule
\multirow{3}{*}{6}
 & Sq+SqQ & 0 & 0.956 & 1 & 1 & 1 & 1 \\
 & $E_8$+SqQ & 0 & 0.008 & 1 & 1 & 1 & 1 \\
 & $E_8$+$E_8$Q & 0 & 1 & 1 & 1 & 1 & 1 \\
\midrule
\multirow{3}{*}{8}
 & Sq+SqQ & 0 & 1 & 1 & 1 & 1 & 1 \\
 & $E_8$+SqQ & 0 & 1 & 1 & 1 & 1 & 1 \\
 & $E_8$+$E_8$Q & 0 & 1 & 1 & 1 & 1 & 1 \\
\bottomrule
\end{tabular}
\end{table}

\section{Conclusion}
\label{sec:conclusion}

We proposed a lattice-theoretic framework for modulo sampling of multidimensional BL signals, replacing the standard component-wise square folding with a general lattice modulo operator. We extended existing recovery algorithms to the general lattice setting, requiring the same oversampling rate. We showed that the reconstruction MSE is governed by the normalized second moment of the lattice Voronoi cell, and that a smaller second moment reduces the MSE through two complementary mechanisms: lower folded signal power at a fixed SNR, and lower quantization error when a matched lattice quantizer places points on a scaled copy of the lattice. Higher-dimensional lattices can offer strictly smaller second moments than the hypercube, with the MSE reduction reaching $\approx 57\%$ for $E_8$ in eight dimensions and $\approx 80\%$ for $\Lambda_{24}$ in twenty-four dimensions. As a two-dimensional instantiation, the hexagonal lattice achieves a MSE reduction at the same inradius. Monte Carlo simulations on 8-dimensional signals using the $E_8$ lattice validate the theoretical predictions under both additive and quantization noise. A topological interpretation connects each lattice to a flat manifold, with each pair of opposite Voronoi facets monitored by two comparators that inject a correction along the corresponding relevant vector, providing a theoretical hardware architecture that scales naturally with the lattice complexity.

\bibliographystyle{IEEEtran}
\bibliography{refs}

\end{document}